\def\ve#1{{\mathchoice{\mbox{\boldmath$\displaystyle #1$}}%
              {\mbox{\boldmath$\textstyle #1$}}%
              {\mbox{\boldmath$\scriptstyle #1$}}%
              {\mbox{\boldmath$\scriptscriptstyle #1$}}}}
\newtheorem{myEx}{Example}
\newtheorem{theorem}{Theorem}
\newtheorem{definition}{Definition}
\DeclareMathOperator{\rk}{rk}
\DeclareMathOperator{\dR}{\ensuremath{d_R}}
\DeclareMathOperator{\wtR}{\ensuremath{wt_R}}
\DeclareMathOperator{\In}{\ensuremath{In}}
\DeclareMathOperator{\Out}{\ensuremath{Out}}
\DeclareMathOperator{\PROB}{\ensuremath{\mathrm{Pr}}}
\newcommand{\qed}{\hfill \mbox{\raggedright \rule{.07in}{.1in}}}
\newenvironment{proof}{\vspace{1ex}\noindent{\bf Proof}\hspace{0.5em}}
	{\hfill\qed\vspace{1ex}}
\newcommand{\Fqm}{\ensuremath{\mathbb F_{q^m}}}
\newcommand{\Fq}{\ensuremath{\mathbb F_{q}}}
\newcommand{\Nodes}{\ensuremath{|\mathcal{N}|}}
\newcommand{\Gab}{\mathcal{G}}
\newcommand{\vbeta}{\ensuremath{\boldsymbol{\beta}}}
\newcommand{\PhiB}{\ensuremath{\mathbf{\Phi}_{\boldsymbol{\beta}}}}
\newcommand{\NN}{\ensuremath{\mathbb{N}}}
\newcommand{\PUM}{\ensuremath{\mathcal{PUM}}}
\newcommand{\kone}{\ensuremath{k^{(1)}}}
\newcommand{\Gsub}[1]{\ensuremath{\ve{G}^{(#1)}}}
\newcommand{\C}[1]{\ensuremath{\mathcal{C}_{#1}}}
\newcommand{\csub}[1]{\ensuremath{\ve{c}^{(#1)}}}
\newcommand{\isub}[1]{\ensuremath{\ve{i}^{(#1)}}}
\newcommand{\rsub}[1]{\ensuremath{\ve{r}^{(#1)}}}
\newcommand{\esub}[1]{\ensuremath{\ve{e}^{(#1)}}}
\renewcommand{\a}{\ve a}
\renewcommand{\b}{\ve b}
\renewcommand{\c}{\ve c}
\newcommand{\e}{\ve e}
\renewcommand{\r}{\ve r}
\renewcommand{\i}{\ve i}
\newcommand{\A}{\ve{A}}
\newcommand{\B}{\ve{B}}
\newcommand{\E}{\ve{E}}
\newcommand{\G}{\ve{G}}
\renewcommand{\S}{\ve{S}}
\newcommand{\X}{\ve{X}}
\newcommand{\Y}{\ve{Y}}
\newcommand{\code}[1]{\ensuremath{\mathcal{C}_{#1}}}
\newcommand{\BMD}[1]{\ensuremath{\mathsf{BMD(#1)}}}
\newcommand{\im}{\ensuremath{i \hspace*{0.5mm} \text{--} \hspace*{0.5mm} 1}}
\newcommand{\fprob}[1]{f_{#1}}
\begin{document}

\title{Error Correction for Differential Linear Network Coding in Slowly-Varying Networks}

\author{\IEEEauthorblockN{Sven Puchinger$^{\diamond}$, Michael Cyran$^{\ast}$, Robert F. H. Fischer$^{\diamond}$, Martin Bossert$^{\diamond}$, Johannes B. Huber$^{\ast}$}
\IEEEauthorblockA{$^{\diamond}$Institute of Communications Engineering, Ulm University, Germany}
\texttt{\{sven.puchinger | robert.fischer | martin.bossert\}@uni-ulm.de}
\IEEEauthorblockA{$^{\ast}$Institute for Information Transmission, University of Erlangen-Nuremberg, Germany}
\texttt{\{cyran | huber\}@LNT.de}
\thanks{This work was supported by Deutsche Forschungsgemeinschaft
		(DFG) within the framework COIN under grants BO~867/29-3, FI~982/4-3, and HU~634/11-3.}
}
\maketitle

\begin{abstract}
Differential linear network coding (DLNC) is a precoding scheme for information transmission over random linear networks.
By using differential encoding and decoding, the conventional approach of lifting, required for inherent channel sounding, can be omitted and in turn higher transmission rates are supported.
However, the scheme is sensitive to variations in the network topology.
In this paper, we derive an extended DLNC channel model which includes slow network changes.
Based on this, we propose and analyze a suitable channel coding scheme matched to the situation at hand using rank-metric convolutional codes.
\end{abstract}

\begin{IEEEkeywords}
differential linear network coding, random linear network coding, rank-metric codes, partial-unit-memory codes
\end{IEEEkeywords}


\section{Introduction}

\noindent
\emph{Network coding}, introduced 2000 in \cite{ahlswede2000network}, is a promising method for transmitting information over a network.
\cite{li2003linear} proved in 2003 that \emph{linear network coding} (LNC) is a \emph{max-flow} achieving approach for general multisource multicast networks.
In LNC, the packets within the networks are vectors over a finite field $\Fq$ and intermediate nodes transmit $\Fq$-linear combinations of the incoming packets.
In 2006, \cite{ho2006} showed that choosing the linear functions at the nodes of a network in a random fashion achieves the \emph{max-flow bound} with probability exponentially approaching 1 with the code length.
This method is called \emph{random linear network coding} (RLNC).

In this paper, we consider a unicast scenario, where one source node with $n$ outgoing edges wants to transmit information to one destination node with $\tilde{n} \ge n$ incoming edges over a network whose topology is neither known by the source nor by the destination.
We assume that the transmission takes place in \emph{generations}, which happen infinitely fast, or equivalenty, every node of the network waits until all incoming edges have sent their packets before sending the outgoing packets.
The source emits $n$ vectors from $\Fq^M$ (length-$M$ packets) into the network in parallel and the destination collects $\tilde{n} \geq n$ such packets from the network during one generation.
Within the network, nodes send---possibly different---$\Fq$-linear combinations of the packets of their incoming edges to the outgoing edges.
The vectors sent by the source during one generation can be seen as a matrix $\X \in \Fq^{n \times M}$ and the collected packets by the receiver analogously as $\Y \in \Fq^{\tilde{n} \times M}$, where the rows of the matrices correspond to the packets.
Errors are considered to be vectors added to nodes as additional incoming edges.
This channel model is equivalent to the operator channel described in \cite{koetter2008coding} under the assumption that the network stays constant during one generation.
The input-output relation of the considered scenario is referred to as the \emph{multiplicative additive matrix channel} (MAMC) \cite{silva2010}
\begin{align} \label{Eq_MAMC}
\Y = \A \cdot \X + \B,
\end{align}
where $\A \in \Fq^{\tilde{n} \times n}$ is called the \emph{network channel matrix}, and $\B \in \Fq^{\tilde{n} \times M}$ is the \emph{additive error matrix}. It is commonly assumed (e.g., \cite{silva2010}) that $\tilde{n}=n$ and $\A$ is uniformly distributed at random among all regular matrices from $\Fq^{n \times n}$.
The latter is a good assumption for sufficiently large networks and field sizes.
We also assume that errors occur additively at arbitrary intermediate nodes.
In this case, it can be shown (cf.\ \cite{koetter2008coding}) that the rank of $\B$ is upper bounded by the number of additive errors within the network during one generation.

Error correction for MAMCs can be done using \emph{lifted rank-metric codes}.
Alternatively, \cite{seidl2013differential} showed that higher rates can be achieved by using differential precoding instead of lifting.
This noncoherent transmission scheme for MAMCs is called \emph{differential linear network coding} (DLNC).
However, the method requires the channel matrix to remain constant between generations.
This is a strong assumption which does not hold for all real-world communication networks.
Hence, it has to be determined which influence a ``slow'' variation of the network (change of the multiplicative matrix in the MAMC) has on the error structure.
We show that slow (i.e., the probability of a leaving/joining node during one generation is small) changes of the network topology result in impulsive error peaks.
For this case, we propose a suitable channel coding scheme based on \emph{rank-metric partial-unit-memory (PUM) codes} and analyze it in terms of transmission rates and error probabilities.

This paper is organized as follows: In Sec.~\ref{sec:DLNC} and \ref{sec:RankMetricCodes} we briefly explain DLNC and rank-metric codes.
An error model which includes slow changes of the network topology is introduced in Sec.~\ref{sec:Error_Model} and Sec.~\ref{sec:CodingScheme} shows a proper coding scheme for it and analyzes its performance.
We conclude the paper with a short summary in Sec.~\ref{sec:Conclusion}.

We make use of the following notations.
Let $q$ be a prime power, $m$ a positive integer.
We write $\Fq^{k \times \ell}$ for the set of all $k \times \ell$ matrices over $\Fq$.
Vectors are considered to be row vectors. 
$\ve{I}_n$ denotes the identity matrix of size $n \times n$.
$\mathrm{GL}(n, \Fq)$ is the set of all invertible $(n \times n)$-matrices over $\Fq$.
The entry of a matrix $\A$ in the $i^\mathrm{th}$ row and the $j^\mathrm{th}$ column is called $A_{ij}$.
Let $\vbeta := \left( \beta_1, \dots, \beta_m \right)$ be a basis\footnote{E.g., $\ve{\beta}=\left(1,\alpha, \alpha^2, \dots, \alpha^{m-1}\right)$, where $\alpha$ is a primitive element of $\Fqm$.} of $\Fqm$ over $\Fq$. Then there is a bijective linear map
\begin{align*}
\PhiB : \, \Fqm^n \rightarrow \Fq^{m \times n}, \, \ve{a} = \left(a_1, \dots, a_n \right) \mapsto \A = \left[ A_{ij} \right], \label{eq:bijective_linear_map}
\end{align*}
where $a_j = \sum_{i=1}^{m} A_{ij} \beta_i$ for all $j \in \left\{ 1, \dots, n \right\}$.
$f^{(*)\ell}(x)$ denotes the $\ell$-fold convolution of $f(x)$, i.e., 
$f^{(*)0}(x) = \delta(x)$, $f^{(*)1}(x) = f(x)$, 
$f^{(*)2}(x) = f(x) * f(x)$,
$f^{(*)3}(x) = f(x) * f(x) * f(x)$, etc.

\section{Differential Linear Network Coding}\label{sec:DLNC}

\noindent
In this section, we briefly describe the DLNC precoding method introduced in \cite{seidl2013differential}.
The concept of DLNC can be compared with \emph{differential phase-shift keying}---information is not transmitted absolutely, but by the transition between two successive symbols.

We first restrict ourselves to square matrices, i.e., $n = M$. 
For differential modulation, we assume to have a sequence of source words $\S_i \in \mathrm{GL}(n, \Fq)$, $i=1,\dots,N$, for some $N \in \NN$. Then we generate the DLNC transmit symbol $\X_i$ of the $i$th generation as follows
\begin{equation} \label{Eq_DLNC_X}
 \ve{X}_i = \ve{X}_{\im} \cdot \ve{S}_i,
\end{equation}
where $\X_0 := \ve{I}_n$ is the initialization word.

Differential demodulation at the destination node starts by calculating the pseudoinverse $\ve{Y}_{\im}^+$ of the previously received matrix $\ve{Y}_{\im}^{}$, followed by calculating the product $\ve{Y}_{\im}^+ \cdot \ve{Y}_i^{}$. The pseudoinverse $\ve{Y}_{\im}^+$ has the following properties
\begin{eqnarray}
 \ve{Y}_{\im}^+ \cdot \ve{Y}_{\im}^{} &=& 
  \ve{I}_n^{} + \ve{L} \ve{I}_\mathcal{U}^\mathrm{T}, \label{Eq_pinv1} \\
 \ve{Y}_{\im}^{} \cdot \ve{Y}_{\im}^+ \cdot \ve{Y}_{\im}^{} &=& 
  \ve{Y}_{\im}^{}.
\end{eqnarray}
Thereby, $\ve{I}_\mathcal{U}$ consists of a subset $\mathcal{U}$ of the rows of $\ve{I}_n$. $\ve{L}$ is a full-rank matrix of appropriate dimensions and $\ve{I}_\mathcal{U}^\mathrm{T} \ve{L} = -\ve{I}_{|\mathcal{U}|}$.
Thus, with $\ve{Y}_i = \ve{A}_i \ve{X}_i + \ve{B}_i$, and if we assume that the network channel matrix $\ve{A}_i$ stays constant between the two generations $\im$ and $i$, the demodulation process results in the demodulated symbol $\tilde{\ve{S}}_i$
\begin{eqnarray}
 \tilde{\ve{S}_i} \hspace*{-2mm} &=& \hspace*{-2mm} \ve{Y}_{\im}^+ \ve{Y}_i^{} 
 = \ve{S}_i + 
     \underbrace{%
      \ve{L} \ve{I}_\mathcal{U}^\mathrm{T} \ve{S}_i^{}
      - \ve{Y}_{\im}^+ \ve{B}_{\im}^{} \ve{S}_i^{} 
      + \ve{Y}_{\im}^+ \ve{B}_i^{}
     }_{=: \ve{E}_i} \label{Eq_E_i} \nonumber \\[-0.3cm]
 &=& \hspace*{-2mm} \ve{S}_i + \ve{E}_i, \label{Eq_DLNC_AMC}
\end{eqnarray}
where $\ve{E}_i$ denotes the \emph{effective error matrix} in generation $i$. It can be directly seen from \eqref{Eq_DLNC_AMC} that differential modulation and demodulation transforms the MAMC with network channel matrix $\ve{A}_i$ and additive error matrix $\ve{B}_i$ into an \emph{additive matrix channel} (AMC) \cite{silva2010} with additive error matrix $\ve{E}_i$.
It is shown in \cite{seidl2013differential} that $\rk(\ve{E}_i) \leq \rk(\ve{B}_i) + \rk(\ve{B}_{\im})$.

In case of non-square matrices $\X_i$ ($n \neq M$), the differential encoding reads $\ve{X}_i = \left[\ve{X}_{\im}\right]_{[n]} \cdot \ve{S}_i$, where $\left[\ve{X}_{\im}\right]_{[n]} \in \mathrm{GL}(n,\Fq)$ denotes the square matrix obtained from the first $n$ columns of $\ve{X}_{\im}$.
The restriction to source matrices $\S_i$ such that $\left[\ve{X}_{i}\right]_{[n]}$ is invertible results in a rate loss of $L_{\mathrm{dlnc}} = \frac{n}{q M}$ (cf.\ \cite{seidl2013differential}), which is negligible for sufficiently large field sizes $q$.
Accordingly, differential demodulation is done via $\left[\ve{Y}_{\im}\right]_{[n]}^+ \cdot \ve{Y}_i^{}$.


\section{Rank-Metric Partial-Unit-Memory Codes}\label{sec:RankMetricCodes}

\noindent
In this section, we give an overview of \emph{rank-metric} \emph{partial-unit-memory (PUM) codes}.
We start by defining rank-metric block codes in general and Gabidulin codes in particular.
Next, we review how rank-metric PUM codes are constructed in \cite{wachter2011partial} and give an idea how to decode these codes and state a sufficient decoding condition (cf.\ \cite{wachter2014convolutional,wachter2012rank}).

\subsection{Rank-Metric Codes}

\noindent
Error-correcting codes are often assessed by the Hamming distance of their codewords. However, in 1978, Delsarte \cite{delsarte1978bilinear} introduced the so-called rank metric which turns out to be practical for certain channels, like the MAMC.
The \emph{rank weight} of an element $\mathbf{a} \in \Fqm^n$ is defined as $\wtR\left(\a\right) := \rk\left( \A \right) = \rk\left( \PhiB(\a) \right)$.
Using this notation, we can define the \emph{rank metric} of two elements $\mathbf{a}, \mathbf{b} \in \Fqm^n$ as $\dR\left( \a, \b \right) := \wtR\left( \a - \b\right) = \rk\left( \PhiB(\a) - \PhiB(\b) \right)$.
The \emph{minimum rank distance} of a code $\code{}$ is defined as $d := \min\left\{\dR\left( \a - \b \right) : \a, \b \in \code{} \land \a \neq \b\right\}$.

As for the Hamming distance, a Singleton-like upper bound on the minimum rank distance can be given, i.e., $d \leq n - k + 1$ (cf.\ \cite{delsarte1978bilinear, gabidulin1985theory, roth1991maximum}). Codes fulfilling this bound with equality are called \emph{maximum rank distance (MRD) codes}.
A special class of MRD codes was introduced by Delsarte \cite{delsarte1978bilinear} and later independently reintroduced by Gabidulin \cite{gabidulin1985theory} and Roth \cite{roth1991maximum} and is usually called \emph{Gabidulin codes}. Their structure and known decoding algorithms have a lot in common with \emph{Reed--Solomon codes} in Hamming metric.
\begin{definition}[Gabidulin Code \cite{gabidulin1985theory}]
Let $g_0, g_1, \dots, g_{n-1} \in \Fqm$ be linearly independent over $\Fq$. Then a $\Gab[n,k] \subset \Fqm^n$ Gabidulin code is a linear code given by the following generator matrix:\vspace{-0.3cm}
\begin{align}
\G_{\Gab} =
\begin{bmatrix}
g_0				& g_1			& \dots 	& g_{n-1}				\\
g_0^{q^1}		& g_1^{q^1}		& \dots 	& g_{n-1}^{q^1}	 		\\
\vdots			& \vdots 		& \ddots 	& \vdots	 			\\
g_0^{q^{k-1}}		& g_1^{q^{k-1}}	& \dots 	& g_{n-1}^{q^{k-1}}	 	
\end{bmatrix}
\end{align}
\end{definition}
A proof that these codes are MRD can be found in \cite{gabidulin1985theory}.

\subsection{Partial-Unit-Memory Codes in Rank Metric}

\noindent
We are considering convolutional codes in rank metric. In general, a convolutional code can be described by a semi-infinite block-Toeplitz generator matrix \cite{johannesson1999fundamentals}.
We only consider terminated generator matrices, which is not a restriction of generality in our case because we are dealing with finite information sequences.
Such a generator matrix of a rate $R = \frac{k}{n}$ convolutional code of memory $\mu$ is given by
\begin{align} 
\G = 
\arraycolsep0.1cm
\begin{bmatrix}
\G^{(0)}	& \G^{(1)}	& \dots 	& \G^{(\mu)} 	&\				&				& \ve{0}				\\[0.1cm]
			& \G^{(0)}	& \G^{(1)}	& \dots 		& \G^{(\mu)} 	&				&				\\[0.1cm]
			&			& \ddots	& \ddots		& \ddots		& \ddots		&				\\[0.1cm]
\ve{0}			&			&			& \G^{(0)}		& \G^{(1)}		& \dots 		& \G^{(\mu)}	
\end{bmatrix}, \label{eq:generator_matrix}
\end{align}
where $\G^{(i)} \in \Fq^{k \times n}$ for all $i$ (cf.\ \cite{johannesson1999fundamentals}).

PUM codes are convolutional codes of memory $\mu = 1$, introduced by Lee \cite{lee1976short} and Lauer \cite{lauer1979some}. It can be shown (e.g.\ \cite[Thm. 8.28]{Bossert1999}) that any convolutional code can be represented as a PUM code. However, PUM codes are usually constructed using block codes to obtain a good algebraic understanding of the convolutional code. For the following definition, let $\kone, k \in \NN$, such that $\kone \leq k \leq n-\kone$.
\begin{definition}
A $\PUM(n,k,\kone)$ code over $\Fq$ is a rate $\frac{k}{n}$ convolutional code with memory $\mu=1$ and generator matrix submatrices $\Gsub{0}$ and $\Gsub{1}$ having $\rk\big(\Gsub{0}\big) = k$ and $\rk\big(\Gsub{1}\big) = \kone$.
\end{definition}
W.l.o.g. we can assume that only the first $\kone$ rows of $\Gsub{1}$ are nonzero (otherwise we can transform the information sequence such that it has this form) and therefore we can subdivide the matrices as follows:
\begin{align}
\Gsub{0} =
\begin{bmatrix}
\Gsub{00} \\
\Gsub{01}
\end{bmatrix},
\quad
\Gsub{1} =
\begin{bmatrix}
\Gsub{10} \\
\mathbf{0}
\end{bmatrix},
\end{align}
where $\Gsub{00}, \Gsub{10} \in \Fq^{\kone \times n}$ and $\Gsub{01} \in \Fq^{(k-\kone) \times n}$.

It is shown in \cite{wachter2011partial} and \cite{wachter2014convolutional} how PUM codes can be constructed based on block rank-metric codes.
One can construct PUM codes based on Gabidulin codes by choosing $\Gsub{00}$, $\Gsub{10}$ and $\Gsub{01}$ as submatrices of generator matrices of Gabidulin code~\eqref{eq:generator_matrix}.
In particular, the submatrices are chosen such that the codes in Tab.~\ref{table:codes} are Gabidulin codes with the properties given in the table.
We also choose a \emph{bounded minimum distance (BMD)} error-erasure decoder for each of the defined codes, e.g., from \cite{koetter2008coding} or \cite{gabidulin2008error}.
\begin{table}[h]
	\centering
	\caption{Code definitions using sub-matrices of the PUM code generator matrix \cite{wachter2014convolutional}}
	\label{table:codes}
	\scalebox{0.8}{
	\begin{tabular}{c:c:c:c:c}
	\hline
	  Code
	& Generator Matrix
	& Type
	& Minimum Rank Distance
	& Decoder
	\vphantom{\scalebox{1.5}{M}}\\
	\hline
	  $\C{0}$
	& $\Gsub{0} = \begin{bmatrix} \Gsub{00} \\ \Gsub{01} \end{bmatrix}$
	& $\Gab[n,k]$
	& $d_0 = n - k + 1$
	& $\BMD{\C{0}}$
	\vphantom{\scalebox{1.5}{$\begin{bmatrix} \Gsub{00} \\ \Gsub{01} \end{bmatrix}$}} \\
	\hdashline
	  $\C{1}$
	& $\begin{bmatrix} \Gsub{01} \\ \Gsub{10} \end{bmatrix}$
	& $\Gab[n,k]$
	& $d_1 = n - k + 1$
	& $\BMD{\C{1}}$
	\vphantom{\scalebox{1.5}{$\begin{bmatrix} \Gsub{01} \\ \Gsub{10} \end{bmatrix}$}} \\
	\hdashline
	  $\C{01}$
	& $\Gsub{01}$
	& $\Gab[n,k - \kone]$
	& $d_{01} = n - k + \kone + 1$
	& $\BMD{\C{01}}$
	\vphantom{\scalebox{1.5}{$\begin{bmatrix} \Gsub{01} \\ \Gsub{10} \end{bmatrix}$}} \\
	\hdashline
	  $\C{\sigma}$
	& $\Gsub{\sigma} := \begin{bmatrix} \Gsub{00} \\ \Gsub{01} \\ \Gsub{10} \end{bmatrix}$
	& $\Gab[n,k + \kone]$
	& $d_{\sigma} = n - k -\kone + 1$
	& $\BMD{\C{\sigma}}$
	\vphantom{\scalebox{1.5}{$\begin{bmatrix} \Gsub{00} \\ \Gsub{01} \\ \Gsub{10} \end{bmatrix}$}} \\
	\hline
	\end{tabular}
	}
\end{table}

\subsection{BMD Decoding of Partial-Unit-Memory Codes}\label{subsec:BMDDecPUM}

\noindent
We consider codewords $\c$ of PUM codes of length $nN$, obtained from an information sequence $\i$ of length $k(N-1)$ by multiplication with the (terminated) PUM generator matrix~\eqref{eq:generator_matrix}. We can divide the codeword $\c \in \Fqm^{nN}$ into $N$ blocks of length $n$, $\c = \big( \csub{1} \dots \csub{N}\big)$, as well as the information word $\i \in \Fqm^{k(N-1)}$ into $(N-1)$ blocks of length $k$, $\i = \big( \isub{1} \dots \isub{N-1}\big)$. With that notation, we can derive the simple encoding rule:
\begin{align}
\csub{j} = \Gsub{1} \, \isub{j-1} + \Gsub{0} \, \isub{j} \quad \forall j \in \{1,\dots,N\}, \label{eq:PUMencoding}
\end{align}
with $\isub{0} = \isub{N} = \mathbf{0}$. We use the decoder to correct an additive rank error $\e = \left( \esub{1} \dots \esub{N}\right) \in \Fqm^{nN}$, so the received word is $\r = \c + \e = \left( \rsub{1} \dots \rsub{N}\right) \in \Fqm^{nN}$. 

It is known that convolutional codes can be ML decoded using the \emph{Viterbi} algorithm \cite{viterbi1967error}. However, the complexity of this procedure depends strongly on the size of the underlying field, which defines the number of states of the respective trellis.
The necessary field size over which a Gabidulin code must be defined grows exponentially with the codelength $n$.
Therefore, Viterbi's algorithm is not a good choice for decoding PUM codes based on Gabidulin codes.

An alternative is an algorithm introduced by Dettmar and Sorger \cite{dettmar1995bounded} for decoding PUM codes over Hamming metric.
It uses the block decoders of the underlying algebraic codes to find a much smaller subgraph of the trellis which contains the most likely code sequence under a certain condition.
Afterwards, the Viterbi algorithm finds this sequence in the reduced trellis in less time than without the reduction step.

In \cite{wachter2014convolutional}, a generalization of this algorithm to rank metric was proposed. It is shown that the following bound provides a sufficient condition for successful decoding, using the rank weight distribution of the additive error $\e$.
\begin{align}
&\sum\limits_{h=i}^{i+j-1} t^{(h)} < \frac{\delta_j}{2}, \quad
\begin{cases}
\forall \,i \in \{0, \dots, N\} \text{ and}\\
\forall j \in \{1, \dots, N-i+1\},
\end{cases}	\label{eq:BMDcondition}
\end{align}
where $t^{(h)} := \wtR \left( \E^{(h)} \right) = \wtR \left(\PhiB\left(\esub{h}\right) \right)$ and
\begin{align}
\delta_j := 
\begin{cases}
d_{01}, & \text{if } j=1,\\
d_0 + (j-2) d_\sigma + d_1, & \text{else}.
\end{cases}
\end{align}

If the sequence is sent block-by-block using a Gabidulin code with dimension $k$ instead, all error patterns containing at least one block $h$ with $t^{(h)} \geq \frac{d_0}{2}$ will not be decoded correctly. PUM codes can handle such \emph{error peaks} if $t^{(h)} < \frac{d_{01}}{2}$ and if the errors in the surrounding blocks are not too large (cf.~\eqref{eq:BMDcondition}). This property makes them suitable for DLNC in slowly-varying networks because they can handle seldom occurring network changes resulting in impulsive error peaks.
It can be shown that in terms of the \emph{sequence error probability} $\PROB_\mathrm{fail}$, i.e., the probability that the decoder fails for at least one block of the sequence, the PUM decoder is strictly better than the block-by-block decoder.

\section{DLNC Channel Model for Slowly-Varying Networks} \label{sec:Error_Model}

\noindent
We analyze the statistical behavior of the rank of the effective error matrix $\ve{E}$, cf., (\ref{Eq_E_i}).
For that purpose, we need to discuss the statistical behavior of the additive error matrix $\ve{B}$, and the effects of a slowly-varying network topology.
In order to derive an analytic expression for the probability distributions, we make the following assumptions:
\begin{enumerate}
\item \label{itm:p_n} The probability $p_\mathrm{n}$ that an error occurs at a certain node during one generation is constant.
\item \label{itm:p_d} The probability $p_{\Delta \mathrm{n}}$ that a certain node leaves or joins the network during one generation is constant.
\item \label{itm:p_e} The probability $p_\mathrm{e}$ that there is a directed edge from node $i$ to $j$ ($i \neq j$) is constant.
\item \label{itm:N} The number of nodes $|\mathcal{N}|$ is sufficiently large and can be assumed to be constant due to very slow network changes.
\item \label{itm:q} The field size $q$ is sufficiently large such that the probability that two independent errors cancel is close to zero.
\end{enumerate}

\subsection{Additive Error Matrix} \label{sec:AddErrMatrix}

\noindent
It was shown in \cite{seidl2013differential} that the rank of the additive error matrix $\ve{B}$ is approximately distributed binomially with parameters $|\mathcal{N}|$ and $p_\mathrm{n}$ if Assumption \ref{itm:q} holds.
Hence, its \emph{probability mass function} (pmf) with mean value $\mu_\ve{B} = |\mathcal{N}| \cdot p_\mathrm{n}$ is given by
\begin{align} \label{Eq_rankBDistribution0}
f_{\rk(\B)}(\tau) \approx {|\mathcal{N}| \choose \tau} \cdot p_\mathrm{n}^{\tau} \cdot (1-p_\mathrm{n})^{|\mathcal{N}|-\tau}.
\end{align}

\subsection{Slow Changes of the Network Topology} \label{sec:NetChanges}

\noindent
If nodes leave or join the network between generations, the network behavior changes and can be expressed as a difference in the channel matrices
\begin{equation} \label{Eq_Delta_A_i}
\ve{A}_i = \ve{A}_{\im} + \Delta \ve{A}_i,
\end{equation}
where we call $\Delta \ve{A}_i$ the \emph{channel deviation}.
Since the rank of $\Delta \ve{A}_i$ turns out to be important for the effective error matrix using DLNC, we are interested in its distribution.
We start by proving an important theorem.
\begin{theorem}\label{thm:leaving_node}
If exactly one node $\nu$ with $\In\{\nu\}$ incoming and $\Out\{\nu\}$ outgoing edges leaves the network between generation $i-1$ and $i$, the rank of $\Delta \ve{A}_i$ is upper bounded by $\rk \left( \Delta \ve{A}_i \right) \leq \min\left\{n, \Out\{\nu\}, \In\{\nu\}\right\}$.
\end{theorem}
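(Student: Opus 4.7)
The plan is to express $\Delta \ve{A}_i$ as the (negated) contribution of those end-to-end linear flows that pass through the leaving node $\nu$, and then to factor that contribution so that the three quantities $n$, $\In\{\nu\}$, and $\Out\{\nu\}$ appear directly as dimensions of the factors.

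First I would adopt the standard linear-network-coding picture: the network is a directed acyclic graph, every intermediate node $\nu'$ performs an $\Fq$-linear map $\ve{M}_{\nu'}$ on its incoming packets, and the end-to-end matrix $\ve{A}$ is obtained by composing these maps along paths from source edges to receiver edges. Splitting each contribution according to whether or not its path uses $\nu$, one obtains a decomposition
\begin{equation*}
\ve{A}_{i-1} = \ve{A}^{(\neg \nu)} + \ve{A}^{(\nu)},
\end{equation*}
where $\ve{A}^{(\neg \nu)}$ aggregates flows avoiding $\nu$. Because the remaining nodes retain their local encodings and the graph is acyclic, removing $\nu$ preserves $\ve{A}^{(\neg \nu)}$ exactly, so $\ve{A}_i = \ve{A}^{(\neg \nu)}$ and hence $\Delta \ve{A}_i = -\ve{A}^{(\nu)}$.

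Next I would factor the through-$\nu$ part as
\begin{equation*}
\ve{A}^{(\nu)} = \ve{F}^{\nu \to}\,\ve{M}_\nu\,\ve{F}^{\to \nu},
\end{equation*}
with $\ve{F}^{\to \nu} \in \Fq^{\In\{\nu\} \times n}$ collecting the transfer from the $n$ source edges to the $\In\{\nu\}$ incoming edges of $\nu$, and $\ve{F}^{\nu \to} \in \Fq^{n \times \Out\{\nu\}}$ collecting the transfer from the $\Out\{\nu\}$ outgoing edges of $\nu$ to the $n$ receiver edges; the central factor is $\nu$'s local encoding $\ve{M}_\nu \in \Fq^{\Out\{\nu\} \times \In\{\nu\}}$. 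Submultiplicativity of rank immediately yields $\rk\bigl(\ve{A}^{(\nu)}\bigr) \leq \min\{n, \In\{\nu\}, \Out\{\nu\}\}$, which by $\Delta \ve{A}_i = -\ve{A}^{(\nu)}$ is the claim.

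The main obstacle is not the rank arithmetic but the rigorous derivation of the path decomposition and of the three-factor product: one has to induct along a topological order of the nodes other than $\nu$, unrolling $\ve{p}^{\mathrm{out}} = \ve{M}_{\nu'}\,\ve{p}^{\mathrm{in}}$ and tracking which terms carry a factor of $\ve{M}_\nu$. Once acyclicity and $\Fq$-linearity of each node's operation are fixed, this is routine bookkeeping; the one subtle point is that the formal definition of $\ve{F}^{\nu \to}$ must be made before $\nu$'s outputs are injected, so that the map from ``any'' value of $\nu$'s output to the received packets is well-defined and independent of $\ve{M}_\nu$ itself.
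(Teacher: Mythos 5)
Your proposal is correct, but it proves the theorem by a different route than the paper. The paper does not factor the transfer matrix at all: it bounds $\rk(\Delta \ve{A}_i) \leq n$ trivially from the matrix dimensions, and then obtains the other two bounds by re-interpreting the node's departure as additive error injections---either as errors $-\ve{x}_{\mathrm{o},j}$ inserted at the $\Out\{\nu\}$ downstream neighbors of $\nu$, or alternatively as errors attached to the origin nodes of the $\In\{\nu\}$ incoming edges---and then invoking the known RLNC result (from the operator-channel analysis of K\"otter and Kschischang) that the rank of the induced matrix deviation is at most the number of injected error packets. Your argument instead decomposes $\ve{A}_{i-1}$ into path gains through and around $\nu$, identifies $\Delta \ve{A}_i = -\ve{F}^{\nu \to}\,\ve{M}_\nu\,\ve{F}^{\to \nu}$ with factors of sizes $n \times \Out\{\nu\}$, $\Out\{\nu\} \times \In\{\nu\}$, $\In\{\nu\} \times n$, and gets all three bounds simultaneously from rank submultiplicativity. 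What your approach buys is self-containedness and a single uniform argument (no appeal to the error-injection bound), at the price of setting up the path-gain bookkeeping over a topological order, which you rightly flag as the technical core; what the paper's approach buys is brevity, reuse of an established result, and an explicit link between the channel deviation and the additive-error picture that the later sections exploit when folding $\rk(\Delta \ve{A}_i)$ into the error statistics. Note that both arguments rest on the same modeling assumption, which you state explicitly and the paper leaves implicit: the surviving nodes keep their local encoding coefficients on their remaining incoming edges, so the ``around-$\nu$'' contribution is unchanged after the departure.
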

\begin{proof}
Due to the dimensions of $\Delta \ve{A}_i$, its rank is upper bounded by $n$.
Let the outgoing packets be independent random linear combinations $\ve{x}_{\mathrm{o},j}$ of the incoming edges, sent to nodes $\nu_{\mathrm{o},j}$ with $j \in \{1,\dots,\Out\{\nu\}\}$.
The channel deviation can then be interpreted as errors of value $\ve{e}_{\mathrm{o},j} := -\ve{x}_{\mathrm{o},j}$ at each node $\nu_{\mathrm{o},j}$.
By the same argument as for the additive error matrix (cf.\ \cite{koetter2008coding}), the rank of $\Delta \ve{A}_i$ is then upper bounded by the number of these additive errors, namely $\Out\{\nu\}$.
Alternatively, the channel deviation can be seen as additive errors at the origin nodes of the $\In\{\nu\}$ incoming edges. Hence, $\rk \left( \Delta \ve{A}_i \right) \leq \In\{\nu\}$.
The proof is illustrated in Fig.~\ref{fig:channel_deviation_illustration}.
\end{proof}

\begin{figure}[h]
{
\tikzstyle{N} = [draw,circle, minimum width=1.5cm, inner sep=0pt]
\resizebox{0.5\textwidth}{!}{
\begin{tikzpicture}[scale=1.8, every path/.style={>=latex}] 
	\def\x1{1}
	\def\ya{1.5}
	\def\yb{1.5}
	\def\ed{0.7}
	\def\off{4}

	\node[font=\Large] (title1) at (0,1) {$\rk \left( \Delta \ve{A}_i \right) \leq \Out\{\nu\}$};
	\node[font=\Large] (title1) at (\off,1) {$\rk \left( \Delta \ve{A}_i \right) \leq \In\{\nu\}$};
	\draw[thick] (\off/2,1.5) -- (\off/2,-\ya-\yb-0.5);

	\node[N]		(i1) 	at (-\x1,0)  { $\nu_{\mathrm{i},1}$ };
	\node 			(dots1) at (0,0)  { $\dots$ };
	\node[N] 		(i2) 	at (\x1,0)  { $\nu_{\mathrm{i},\In\{\nu\}}$ };
	\node[N] 		(v) 	at (0,-\yb) { $\nu$ };
	\node[N]		(o1) 	at (-\x1,-\ya-\yb)  { $\nu_{\mathrm{o},1}$ };
	\node 			(dots2) at (0,-\ya-\yb)  { $\dots$ };
	\node[N] 		(o2) 	at (\x1,-\ya-\yb)  { $\nu_{\mathrm{o},\Out\{\nu\}}$ };
	\draw[->,>=latex,thick] (i1) -- (v);
	\draw[->,>=latex,thick] (i2) -- (v);
	\draw[->,>=latex,thick] (v)  -- (o1) node[pos=0.1,left] {$\ve{x}_{\mathrm{o},1}$};
	\draw[->,>=latex,thick] (v)  -- (o2) node[pos=0.1,right] {$\ve{x}_{\mathrm{o},\Out\{\nu\}}$};
	
	\node (e1) at (-\ed*\x1/\ya-\x1,\ed-\ya-\yb) {};
	\draw[->, >=latex,red,thick] (e1) -- (o1) node[pos=0,above] {$-\ve{x}_{\mathrm{o},1}$};
	\node (e2) at (\ed*\x1/\ya+\x1,\ed-\ya-\yb) {};
	\draw[->, >=latex,red,thick] (e2) -- (o2) node[pos=0,above] {$-\ve{x}_{\mathrm{o},\Out\{\nu\}}$};

	\node[N]		(i1) 	at (\off-\x1,0)  { $\nu_{\mathrm{i},1}$ };
	\node 			(dots1) at (\off+0,0)  { $\dots$ };
	\node[N] 		(i2) 	at (\off+\x1,0)  { $\nu_{\mathrm{i},\In\{\nu\}}$ };
	\node[N] 		(v) 	at (\off+0,-\ya) { $\nu$ };
	\node[N]		(o1) 	at (\off-\x1,-\yb-\ya)  { $\nu_{\mathrm{o},1}$ };
	\node 			(dots2) at (\off+0,-\yb-\ya)  { $\dots$ };
	\node[N] 		(o2) 	at (\off+\x1,-\yb-\ya)  { $\nu_{\mathrm{o},\Out\{\nu\}}$ };
	\draw[->,>=latex,thick] (i1) -- (v) node[pos=0.8,left] {$\ve{x}_{\mathrm{i},1}$};
	\draw[->,>=latex,thick] (i2) -- (v) node[pos=0.8,right] {$\ve{x}_{\mathrm{i},\In\{\nu\}}$};
	\draw[->,>=latex,thick] (v)  -- (o1);
	\draw[->,>=latex,thick] (v)  -- (o2);
	
	\node (e1) at (\off-\ed*\x1/\ya-\x1,-\ed) {};
	\draw[->, >=latex,red,thick] (e1) -- (i1) node[pos=0,below] {$-\ve{x}_{\mathrm{i},1}$};
	\node (e2) at (\off+\ed*\x1/\ya+\x1,-\ed) {};
	\draw[->, >=latex,red,thick] (e2) -- (i2) node[pos=0,below] {$-\ve{x}_{\mathrm{i},\In\{\nu\}}$};

\end{tikzpicture}
}
}
\caption{Illustration of the proof of Theorem \ref{thm:leaving_node}.}
\label{fig:channel_deviation_illustration}
\end{figure}
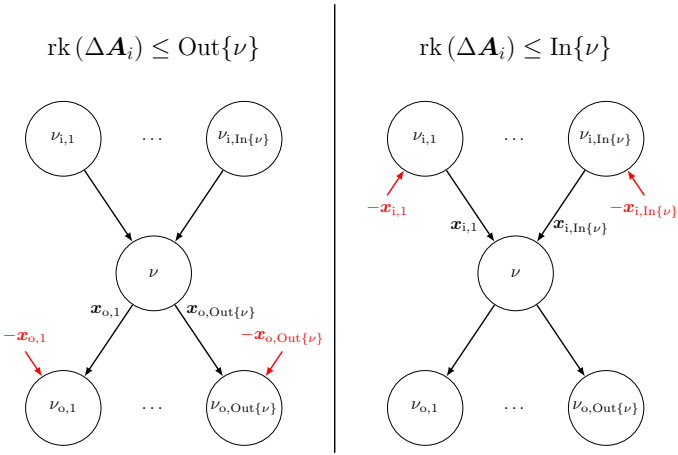

The same argument holds for nodes that join the network between generations.
For a leaving or joining node $\nu$, we define $w(\nu) := \min\left\{n, \Out\{\nu\}, \In\{\nu\}\right\}$, and $\ell_i$ to be the number of leaving nodes in generation $i$.
Its pmf $f_W(w)$ can be derived using Assumption \ref{itm:p_e}.
Due to the subadditivity of the rank, we obtain the following upper bound on $\rk \left( \Delta \ve{A}_i \right)$.
\begin{align*}
\rk \left( \Delta \ve{A}_i \right) \leq \sum\limits_{\substack{\nu \text{ leaving}\\\text{in gen. } i}} w(\nu) = \sum\limits_{j=1}^{\ell_i} w(\nu_j)
\end{align*}
This means that the rank of the deviation matrix $\Delta \ve{A}_i$ is composed by two random processes. The first one determines the number $\ell_i$ of leaving/joining nodes before the current generation according to the pmf $f_L(\ell)$, which is binomially distributed with $|\mathcal{N}|$ and $p_{\Delta \mathrm{n}}$.
The second process determines the $\ell_i$ corresponding node weights $w(\nu_j)$, which are distributed according to $f_W(w)$. 
Hence, the pmf $f_{\rk(\Delta \ve{A}_i)}$ is approximately given by 
\begin{equation} \label{Eq_tauDeltaA}
f_{\rk(\Delta \ve{A}_i)}(\tau) = \sum\nolimits_{\ell=0}^{|\mathcal{N}|} f_L(\ell) \cdot f_W^{(*)\ell}(\tau).
\end{equation}

\subsection{Effective Error Matrix}

\noindent
Based on the insights of the last two paragraphs, we are able to bound the rank of the effective error matrix $\ve{E}_i$.
\begin{theorem}
 The rank of the effective error matrix $\ve{E}_i$, which is present in a 
 DLNC system applied to a slowly-varying network is upper bounded by
 \begin{equation} \label{Eq_theoremEi}
  \rk(\ve{E}_i) \le \rk(\ve{B}_{\im}) + \rk(\ve{B}_i) + \rk(\Delta \ve{A}).
 \end{equation}
Given Assumption \ref{itm:q}, the bound is tight with high probability.
\end{theorem}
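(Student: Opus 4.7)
The plan is to mimic the derivation of $\ve{E}_i$ in (\ref{Eq_E_i}) from \cite{seidl2013differential}, but this time \emph{without} assuming that the channel matrix is constant between generations. Substituting $\ve{A}_i = \ve{A}_{i-1} + \Delta\ve{A}_i$ from \eqref{Eq_Delta_A_i} and the DLNC encoding rule \eqref{Eq_DLNC_X} into the MAMC equation \eqref{Eq_MAMC} gives
\begin{align*}
\ve{Y}_i &= \ve{A}_{i-1}\ve{X}_{i-1}\ve{S}_i + \Delta\ve{A}_i \ve{X}_i + \ve{B}_i \\
&= (\ve{Y}_{i-1} - \ve{B}_{i-1})\ve{S}_i + \Delta\ve{A}_i \ve{X}_i + \ve{B}_i,
\end{align*}
using $\ve{A}_{i-1}\ve{X}_{i-1} = \ve{Y}_{i-1} - \ve{B}_{i-1}$. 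Left multiplication by the pseudoinverse and application of property \eqref{Eq_pinv1} then yields
\begin{align*}
\tilde{\ve{S}}_i = \ve{S}_i + \ve{L}\ve{I}_\mathcal{U}^\mathrm{T}\ve{S}_i - \ve{Y}_{i-1}^{+}\ve{B}_{i-1}\ve{S}_i + \ve{Y}_{i-1}^{+}\Delta\ve{A}_i\ve{X}_i + \ve{Y}_{i-1}^{+}\ve{B}_i,
\end{align*}
so that $\ve{E}_i$ differs from \eqref{Eq_E_i} only by the additional term $\ve{Y}_{i-1}^{+}\Delta\ve{A}_i\ve{X}_i$ arising from the channel deviation.

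The next step is to bound the rank of $\ve{E}_i$ by subadditivity. The three ``old'' terms $\ve{L}\ve{I}_\mathcal{U}^\mathrm{T}\ve{S}_i - \ve{Y}_{i-1}^{+}\ve{B}_{i-1}\ve{S}_i + \ve{Y}_{i-1}^{+}\ve{B}_i$ together contribute at most $\rk(\ve{B}_{i-1}) + \rk(\ve{B}_i)$, which is exactly the bound proven in \cite{seidl2013differential} for the constant-channel case and may therefore be quoted. For the new term I use $\rk(\ve{Y}_{i-1}^{+}\Delta\ve{A}_i\ve{X}_i) \leq \rk(\Delta\ve{A}_i)$ since left/right multiplication cannot increase rank. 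Adding the two bounds gives \eqref{Eq_theoremEi}.

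For the tightness claim, the strategy is to note that $\ve{B}_{i-1}$, $\ve{B}_i$, and $\Delta\ve{A}_i$ model three statistically independent sources of disturbance (two additive error injections at two different generations and a topology change), so their column spaces -- after the various linear transformations above -- lie in generic position over $\Fq$. Under Assumption \ref{itm:q} (field size $q$ sufficiently large), the probability that any nontrivial $\Fq$-linear combination of their columns vanishes is $O(q^{-1})$, and thus the rank of the sum equals the sum of the individual ranks with probability approaching $1$. The main obstacle I would expect is making this last cancellation argument rigorous: one has to argue that $\ve{Y}_{i-1}^{+}\Delta\ve{A}_i\ve{X}_i$ generically attains its maximal possible rank $\rk(\Delta\ve{A}_i)$ (neither $\ve{Y}_{i-1}^{+}$ nor $\ve{X}_i$ collapses its column space) and that the image spaces of the four summands intersect trivially with high probability; this can be done by a union bound over the finitely many linear dependencies, each of which has probability at most $q^{-1}$ under a uniform coding-vector assumption.
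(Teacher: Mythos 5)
Your proposal is correct and follows essentially the same route as the paper: you arrive at the identical four-term decomposition $\ve{E}_i = \ve{L}\ve{I}_\mathcal{U}^\mathrm{T}\ve{S}_i - \ve{Y}_{\im}^{+}\ve{B}_{\im}\ve{S}_i + \ve{Y}_{\im}^{+}\ve{B}_i + \ve{Y}_{\im}^{+}\Delta\ve{A}_i\ve{X}_i$ (via a slightly different but equivalent algebraic substitution), bound it by rank subadditivity, and justify tightness through Assumption 5 on the field size, just as the paper does. The only cosmetic differences are that you quote the constant-channel bound of \cite{seidl2013differential} for the first three terms instead of re-deriving it, and you sketch the high-probability non-cancellation argument in somewhat more detail than the paper, which simply asserts it.
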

\begin{proof}
 Inserting (\ref{Eq_MAMC}) and (\ref{Eq_Delta_A_i}) into (\ref{Eq_E_i}) 
 results in
 \begin{eqnarray}
  \tilde{\ve{S}_i}^{} \hspace*{-2mm} &=& \hspace*{-2mm} 
  \ve{Y}_{\im}^+ \left[(\ve{A}_{\im}^{} + \Delta \ve{A}_i^{}) \ve{X}_i^{} 
    + \ve{B}_i^{} \right] \nonumber \\
  &=& \hspace*{-2mm} \ve{Y}_{\im}^+ \ve{A}_{\im}^{} \ve{X}_i^{}
    + \ve{Y}_{\im}^+ \Delta \ve{A}_i^{} \ve{X}_i^{}
    + \ve{Y}_{\im}^+ \ve{B}_i^{}. \label{Eq_proof_Ei_1}
 \end{eqnarray}
With the aid of (\ref{Eq_pinv1}) $\ve{Y}_{\im}^{} = \ve{I}_n + \ve{L} \ve{I}_\mathcal{U}^{\mathrm{T}}$, we can rearrange 
$\ve{Y}_{\im} = \ve{A}_{\im} \ve{X}_{\im} + \ve{B}_{\im}$ in the following way
\begin{equation} \label{Eq_proof_Ei_2}
 \ve{Y}_{\im}^+ \ve{A}_{\im}^{} = \ve{X}_{\im}^{-1} 
   + \ve{L} \ve{I}_{\mathcal{U}}^{\mathrm{T}} \ve{X}_{\im}^{-1}
   - \ve{Y}_{\im}^{+} \ve{B}_{\im}^{} \ve{X}_{\im}^{-1}.
\end{equation}
Combining (\ref{Eq_proof_Ei_1}) and (\ref{Eq_proof_Ei_2}) we obtain
\begin{small}
\begin{eqnarray}
 \tilde{\ve{S}}_i^{} \hspace*{-2mm} &=& \hspace*{-2mm} \left( \ve{X}_{\im}^{-1} 
   + \ve{L} \ve{I}_{\mathcal{U}}^{\mathrm{T}} \ve{X}_{\im}^{-1}
   - \ve{Y}_{\im}^{+} \ve{B}_{\im}^{} \ve{X}_{\im}^{-1} \right)
   \ve{X}_{i}^{} \nonumber \\
  && \hspace*{-2mm} + \ve{Y}_{\im}^{+} \Delta \ve{A}_{i}^{} \ve{X}_{i}^{}
     + \ve{Y}_{\im}^{+} \ve{B}_{i}^{} \nonumber \\
  &=&\hspace*{-2mm} \ve{X}_{\im}^{-1} \ve{X}_{i}^{}
    + \ve{L}_{}^{} \ve{I}_{n}^{\mathrm{T}} \ve{X}_{\im}^{-1} \ve{X}_{i}^{} 
    - \ve{Y}_{\im}^{+} \ve{B}_{\im}^{} \ve{X}_{\im}^{-1} \ve{X}_{i}^{} \nonumber \\
  &&\hspace*{-2mm} + \ve{Y}_{\im}^{+} \Delta \ve{A}_{i}^{} \ve{X}_{i}^{}
     + \ve{Y}_{\im}^{+} \ve{B}_{i}^{} \nonumber \\
  &=&\hspace*{-2mm} \ve{S}_{i}^{} 
    + \underbrace{\ve{L} \ve{I}_{\mathcal{U}}^{\mathrm{T}} \ve{S}_{i}^{}
    - \ve{Y}_{\im}^{+} \ve{B}_{\im}^{} \ve{S}_{i}^{}
    + \ve{Y}_{\im}^{+} \ve{B}_{i}^{} 
    + \ve{Y}_{\im}^{+} \Delta \ve{A}_{i}^{} \ve{X}_{i}^{}}_{=: \ve{E}_i}.\nonumber%
\end{eqnarray}%
\end{small}%
Thus, the effective error matrix consists of four parts.
The first part corresponds to the rank deficiency of $\ve{Y}_{\im}^{}$, which is assumed to be zero. 
This assumption is justifiable as long as $q$ is sufficiently large (cf. \cite{seidl2013differential}).
The second and the third part have the same rank as the preceding and the current additive error matrices, $\ve{B}_{\im}$ and $\ve{B}_i$, respectively.
The rank of the last part equals to $\rk(\Delta \ve{A}_i)$.
Finally, taking the subadditivity of the rank into account, we obtain \eqref{Eq_theoremEi} as an upper bound on the rank of $\ve{E}_i$.
Due to Assumption \ref{itm:q}, with high probability, the sum of the matrices has the same rank as the sum of the ranks.
\end{proof}

As a consequence, the rank $\ve{E}_i$ can approximately be described by the sum of three random variables
\begin{equation}
\rk(\ve{E}_i) = \rk(\ve{B}_{\im}) + \rk(\ve{B}_i) + \rk(\Delta \ve{A}),
\end{equation}
where the first two summands, which describe the effect of the additive error in the MAMC are distributed according to~\eqref{Eq_rankBDistribution0}. 
The third summand describes the effect of the slowly-varying network and its pmf is given by \eqref{Eq_tauDeltaA}. 
The resulting approximate pmf can be described as 
\begin{equation}
f_{\rk(\E_i)}(\tau)  = f_{\rk(\B)}^{(*)2}(\tau) * f_{\rk(\Delta \A)}(\tau) \label{eq:pdf_E}.
\end{equation}

\begin{myEx}
Fig.~\ref{fig:statistics_id10013} illustrates the influence of the additive rank error and the rank error caused by slow network changes.
The plot shows the pmfs of the ranks of the additive error $\B_i$, the channel deviation $\Delta \A_i$, the additive error without the influence of network changes $\rk(\tilde{\E}_i) := \rk(\B_i)+\rk(\B_{i-1})$ and the effective error $\E_i$.
We consider a network with $|\mathcal{N}| = 100$ nodes and probability parameters $p_\mathrm{n} = 0.03$, $p_\mathrm{e} = 0.05$ and $p_{\Delta \mathrm{n}} = 0.01$.
We have chosen $p_{\Delta \mathrm{n}}$ relatively large such that the changes are more visible in the illustration.
Plot (iii) in Fig.~\ref{fig:statistics_id10013} depicts the effect caused by the differential demodulation in case of no network changes (cf. \cite{seidl2013differential}), i.e., the expected value of the rank of the effective error is approximately doubled with respect to the additive error given by the MAMC (plot (i)).
The effect of additional slow network changes can be seen in plot (iv).
In contrast to (iii), both variance and mean are increased, making errors with high rank more likely.
These error peaks can be better handled using PUM codes than using block codes.
\vspace{-0.3cm}
\begin{figure}[h]
%
%
%
\definecolor{mycolor1}{rgb}{0.00000,0.00000,0.56250}%
\begin{tikzpicture}

\draw (6.4,3.85) 	node {(i)};
\draw (6.4,1.575) 		node {(ii)};
\draw (6.4,-0.7) 		node {(iii)};
\draw (6.4,-2.95) 	node {(iv)};

\small
\begin{axis}[%
width=0.39\textwidth,
height=0.1\textwidth, 
area legend,
scale only axis,
xmin=-1,
xmax=31,
xmajorgrids,
ymin=0,
ymax=0.4,
xticklabels={},
ytick={0, 0.2, 0.4},
ylabel={$\fprob{\rk(\Delta \A_i)}(\tau)$},
ymajorgrids,
name=plot2,
]
\addplot[ybar,bar width=0.01\textwidth,draw=black,fill=mycolor1] plot table[row sep=crcr] {%
0	0.36889018600201\\
1	0.0231180459860032\\
2	0.0586944355852956\\
3	0.0911433918438329\\
4	0.0981777700218436\\
5	0.0822248030761639\\
6	0.0616778091358218\\
7	0.0479092074304523\\
8	0.0395346038355769\\
9	0.032288769474248\\
10	0.0253376739819835\\
11	0.0190948423706681\\
12	0.0142877946335622\\
13	0.0104795373719017\\
14	0.0077693599147917\\
15	0.00567790541638334\\
16	0.0041304360907685\\
17	0.00289948911127024\\
18	0.00205113701325663\\
19	0.00142483196033241\\
20	0.00100829818089914\\
21	0.000696589862860375\\
22	0.000485615082349968\\
23	0.000312791474356274\\
24	0.000227342475974991\\
25	0.000147670311287091\\
26	0.000104103357182831\\
27	7.05255113400449e-05\\
28	4.76588779704058e-05\\
29	2.93655712746945e-05\\
30	1.88950602054386e-05\\
31	1.25164730023288e-05\\
32	7.58209422256455e-06\\
33	7.34139281867361e-06\\
34	4.57332667392783e-06\\
35	3.00876754863673e-06\\
36	9.62805615563753e-07\\
37	7.22104211672815e-07\\
38	7.22104211672815e-07\\
39	4.81402807781876e-07\\
40	7.22104211672815e-07\\
41	2.40701403890938e-07\\
42	1.20350701945469e-07\\
43	1.20350701945469e-07\\
44	0\\
45	0\\
46	0\\
47	0\\
48	0\\
49	0\\
50	0\\
};
\addplot [color=black,solid,forget plot]
  table[row sep=crcr]{%
-1	0\\
51	0\\
};
\end{axis}

\begin{axis}[%
width=0.39\textwidth,
height=0.1\textwidth, 
area legend,
scale only axis,
xmin=-1,
xmax=31,
xmajorgrids,
ymin=0,
ymax=0.4,
xticklabels={},
ytick={0, 0.2, 0.4},
ylabel={$\fprob{\rk(\B_i)}(\tau)$},
ymajorgrids,
at=(plot2.above north west),
anchor=below south west,
]
\addplot[ybar,bar width=0.01\textwidth,draw=black,fill=mycolor1] plot table[row sep=crcr] {%
0	0.0460447343559131\\
1	0.146953863558409\\
2	0.227493275404529\\
3	0.229012341964485\\
4	0.169828440074377\\
5	0.100105306864202\\
6	0.0488663565630247\\
7	0.0204779126374255\\
8	0.00758040931273732\\
9	0.00253518753648131\\
10	0.000789861656868114\\
11	0.000226500021061373\\
12	6.57114832622261e-05\\
13	1.51641884451291e-05\\
14	3.36981965447313e-06\\
15	1.20350701945469e-06\\
16	1.20350701945469e-07\\
17	2.40701403890938e-07\\
18	0\\
19	0\\
20	0\\
21	0\\
22	0\\
23	0\\
24	0\\
25	0\\
26	0\\
27	0\\
28	0\\
29	0\\
30	0\\
31	0\\
32	0\\
33	0\\
34	0\\
35	0\\
36	0\\
37	0\\
38	0\\
39	0\\
40	0\\
41	0\\
42	0\\
43	0\\
44	0\\
45	0\\
46	0\\
47	0\\
48	0\\
49	0\\
50	0\\
};
\addplot [color=black,solid,forget plot]
  table[row sep=crcr]{%
-1	0\\
51	0\\
};
\end{axis}

\begin{axis}[%
width=0.39\textwidth,
height=0.1\textwidth, 
area legend,
scale only axis,
xmin=-1,
xmax=31,
xmajorgrids,
ymin=0,
ymax=0.2,
ytick={0, 0.1, 0.2},
xticklabels={},
ylabel={$\fprob{\rk(\tilde{\E}_i)}(\tau)$},
ymajorgrids,
name=plot3,
at=(plot2.below south west),
anchor=above north west,
]
\addplot[ybar,bar width=0.01\textwidth,draw=black,fill=mycolor1] plot table[row sep=crcr] {%
0	0.00204669877410529\\
1	0.0135444890492528\\
2	0.042498534607397\\
3	0.0879037842804545\\
4	0.13497466310707\\
5	0.163434411139366\\
6	0.1635750249697\\
7	0.139743621075078\\
8	0.103406305566257\\
9	0.0680831289315842\\
10	0.0401536608288798\\
11	0.0215898106674687\\
12	0.0106607388624329\\
13	0.00489937883072907\\
14	0.00212124252520825\\
15	0.000851542619683554\\
16	0.000331455657704919\\
17	0.000120350701945469\\
18	4.31052003906731e-05\\
19	1.32631385817456e-05\\
20	2.70175045183706e-06\\
21	1.842102580798e-06\\
22	2.45613677439733e-07\\
23	0\\
24	0\\
25	0\\
26	0\\
27	0\\
28	0\\
29	0\\
30	0\\
31	0\\
32	0\\
33	0\\
34	0\\
35	0\\
36	0\\
37	0\\
38	0\\
39	0\\
40	0\\
41	0\\
42	0\\
43	0\\
44	0\\
45	0\\
46	0\\
47	0\\
48	0\\
49	0\\
50	0\\
};
\addplot [color=black,solid,forget plot]
  table[row sep=crcr]{%
-1	0\\
51	0\\
};
\end{axis}

\begin{axis}[%
width=0.39\textwidth,
height=0.1\textwidth, 
area legend,
scale only axis,
xmin=-1,
xmax=31,
xmajorgrids,
ymin=0,
ymax=0.2,
ytick={0, 0.1, 0.2},
ylabel={$\fprob{\rk(\E_i)}(\tau)$},
ymajorgrids,
at=(plot3.below south west),
anchor=above north west,
xlabel={$\tau$},
]
\addplot[ybar,bar width=0.01\textwidth,draw=black,fill=mycolor1] plot table[row sep=crcr] {%
0	0.00075378837606254\\
1	0.00500990498557695\\
2	0.0161725553978579\\
3	0.0343824762500784\\
4	0.0557657258148203\\
5	0.0738581205223859\\
6	0.0853304897819184\\
7	0.090394921003887\\
8	0.0898161323730002\\
9	0.0857213839495637\\
10	0.0794393229216877\\
11	0.0711941945768746\\
12	0.0620976464192166\\
13	0.0527427126728921\\
14	0.0438309888075075\\
15	0.0355262991459153\\
16	0.0283831165649355\\
17	0.0223183008347549\\
18	0.0173140449637591\\
19	0.0131960860478045\\
20	0.00997326617927907\\
21	0.00744602424526294\\
22	0.00550346567039209\\
23	0.00402167835439819\\
24	0.00289787297327269\\
25	0.00205197946817025\\
26	0.00147576978089663\\
27	0.00104741952744174\\
28	0.000732174372447844\\
29	0.000502525584041693\\
30	0.00034938545615802\\
31	0.000244754029568694\\
32	0.000167262914336458\\
33	0.000109052472783241\\
34	7.79823425871152e-05\\
35	4.82630876169075e-05\\
36	3.61052105836407e-05\\
37	2.07543557436574e-05\\
38	1.48596274851038e-05\\
39	1.14210360009476e-05\\
40	6.26314877471319e-06\\
41	3.80701200031586e-06\\
42	3.68420516159599e-06\\
43	2.33332993567746e-06\\
44	1.22806838719866e-06\\
45	7.36841032319199e-07\\
46	4.91227354879466e-07\\
47	4.91227354879466e-07\\
48	2.45613677439733e-07\\
49	0\\
50	2.45613677439733e-07\\
};
\addplot [color=black,solid,forget plot]
  table[row sep=crcr]{%
-1	0\\
51	0\\
};
\end{axis}
\end{tikzpicture}%
\caption{Influence of slow network changes on the pmf of $\rk(\E_i)$.}
\label{fig:statistics_id10013}
\end{figure}
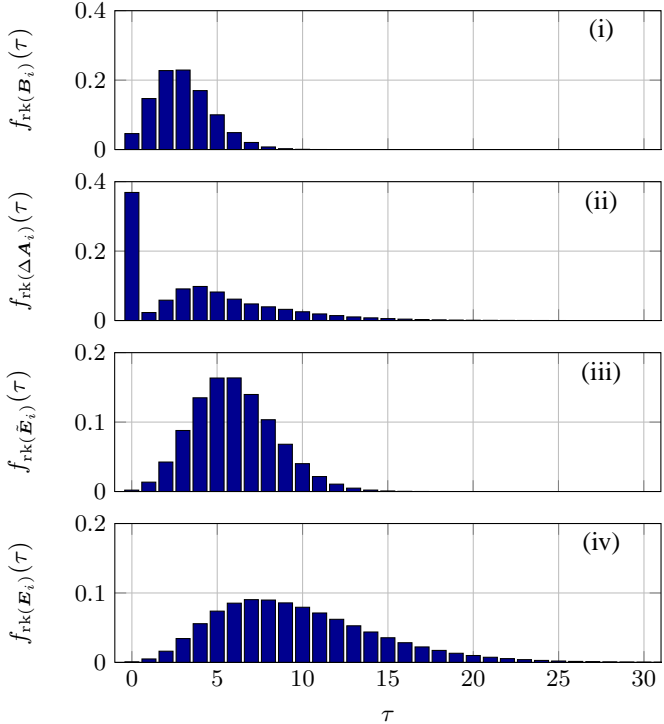
\end{myEx}


\section{Suitable Coding Schemes}\label{sec:CodingScheme}

\noindent
In this section, we describe how PUM codes can be used to make DLNC resilient against the additive error peaks caused by network changes.
We show how PUM codes can be combined with DLNC, give an idea how to choose code parameters and present numerical results which demonstrate that better results can be obtained using PUM codes instead of ordinary block codes.

\subsection{Combining DLNC and PUM Codes}

\noindent
An information sequence $\i^{(j)} \in \Fqm^{k}$ for $j=1,\dots,N-1$ has to be transmitted using PUM codes in combination with DLNC.
We first determine the corresponding codeword sequence $\c^{(j)} \in \Fqm^{n}$ for $j=1,\dots,N$ using the encoding rule \eqref{eq:PUMencoding} and calculate the corresponding matrix representation sequence $\S_j := \PhiB(\c^{(j)})$ for $j=1,\dots,N$, which we use as source symbols for the differential encoding described in Sec.~\ref{sec:DLNC}.
After the transmission, the sequence is demodulated and the result $\tilde{\S}_j = \S_j + \E_j$ ($j=1,\dots,N$) can be decoded using the PUM BMD decoder described in Sec.~\ref{subsec:BMDDecPUM}.
The error sequence $\E_j$ is distributed according to the error model derived in Sec.~\ref{sec:Error_Model}.

\subsection{PUM Code Parameter Choices}

\noindent
When designing a PUM code for DLNC in slowly-varying networks, one has several possibilities to choose the code parameters $N$ and $\kone$.
Here, we assume that $n$ and $k$ are fixed, e.g., because the desired code rate and the packet size is given.
In general, $N$ should be chosen sufficiently large, such that the rate loss of the termination of the PUM code does not play a role, i.e., $\frac{N}{N+1} \cdot \frac{k}{n} \approx \frac{k}{n}$.

It is not easy to analytically derive a good range for $\kone$ for general DLNC channels in slowly-varying networks due to the involved analytic description of the pmf of the effective rank error \eqref{eq:pdf_E}.
However, for a given network, one can use the pmf of $\rk(\E_i)$ to get an idea how to choose $\kone$.
As already mentioned, the PUM decoder described in \cite{wachter2014convolutional} is able to decode up to $\lfloor\frac{d_{01}-1}{2}\rfloor$ errors under certain conditions.
Hence, a necessary condition for good decoding results is that $\PROB\big\{\rk(\E_i) \geq \tfrac{d_{01}}{2} = \tfrac{n-k+\kone+1}{2}\big\}$ is small, or equivalently, that $\kone$ is as large as possible.
On the other hand, $\kone$ should not be chosen too large, because otherwise $d_\sigma = n-k-\kone+1$ gets too small and the decoding capabilities decrease again (cf.\ \eqref{eq:BMDcondition}).
The following figure illustrates this behavior by showing how $\PROB_\mathrm{fail}$ changes as a function of $\kone$ for given $p_\mathrm{n}=0.03$, $p_\mathrm{e}=0.05$, $p_{\Delta \mathrm{n}}=0.005$, $|\mathcal{N}|=100$, $N=50$ and $d_0 = n-k+1 =41$.
It can be seen from Fig.~\ref{fig:k_1choice} that $\kone=24$ is an optimal choice for this given parameter set.
\begin{figure}[h]
%
%
\begin{tikzpicture}
\small
\begin{axis}[%
width=0.4\textwidth,
height=1.5in,
scale only axis,
xmin=0,
xmax=30,
xlabel={$\kone$},
xmajorgrids,
ymode=log,
ymin=0.001,
ymax=1,
yminorticks=true,
ylabel={$\PROB_\mathrm{fail}$},
ymajorgrids,
yminorgrids
]
\addplot [color=blue,solid,mark=asterisk,mark options={solid},forget plot]
  table[row sep=crcr]{%
0	0.224676069870161\\
2	0.151481080480831\\
4	0.101884041718067\\
6	0.0677145485609147\\
8	0.0445697472755982\\
10	0.0298264088420353\\
12	0.0196344226412196\\
14	0.0135380534713882\\
16	0.00957667118693956\\
18	0.00745719631897923\\
20	0.00607214909755127\\
22	0.00538053223852067\\
24	0.00531540179378102\\
26	0.00551832162311634\\
28	0.00615683444263274\\
30	0.0075111632272912\\
};
\end{axis}
\end{tikzpicture}%

\caption{$\PROB_\mathrm{fail}$ as a function of $\kone$.}
\label{fig:k_1choice}
\end{figure}

\subsection{Numerical Results}

\noindent
Example \ref{ex:10012}  shows numerical results obtained by simulating random linear networks with given parameters $\Nodes$, $p_\mathrm{n}$ and $p_\mathrm{e}$ (cf.\ Sec.~\ref{sec:Error_Model}).
In order to run the simulations in sufficiently short time\footnote{Instead of simulating RLNC using real networks over finite fields, we evaluated the statistical behavior of the nodes and counted the number of additive errors and leaving/joining nodes and their number of incoming and outgoing edges.}, we assume that the field size $q$ is large enough, such that independent errors cancel only with negligibly small probability, and therefore the ranks of the error matrices are very likely to be equal to the number of errors happened.
If this assumption does not hold (e.g., if $q$ is relatively small), our results are still upper bounds on the sequence error probability.
We used rank-metric PUM codes with parameters $N$, $n$, $k$ and $\kone$ and checked if the PUM decoder, described in Sec.~\ref{subsec:BMDDecPUM}, is able to correct the error pattern.
$\PROB_\mathrm{fail}$ denotes the probability that the PUM decoder fails, i.e., at least one generation of the sequence is not contained in the subgraph of the trellis.
For comparison, we also checked the cases (block-by-block) when the source symbols for every generation are encoded using Gabidulin block codes with the same code dimension $k$, both differentially and via lifting.

\begin{myEx}\label{ex:10012}
Fig.~\ref{fig:num_res_10012} shows the sequence failure probability $\PROB_\mathrm{fail}$ as a function of the network change probability $p_{\Delta \mathrm{n}}$ of a PUM decoder of a code with $N = 50$, $n = 35$, $k=15$ and $\kone = 10$ and compares it to block-by-block decoders in combination with both DLNC and lifting.
The network is assumed to have $|\mathcal{N}| = 100$ nodes and probability parameters $p_\mathrm{n} = 0.01$ and $p_\mathrm{e} = 0.02$.
It can be seen that the PUM decoder is better than the block-by-block decoder for any $p_{\Delta \mathrm{n}}$ and also improves upon the lifting approach for $p_{\Delta \mathrm{n}} \le 10^{-2}$.

\begin{figure}[h]
%
%
\begin{tikzpicture}
\small
\begin{axis}[%
width=0.4\textwidth,
height=3in,
scale only axis,
xmode=log,
xmin=1e-07,
xmax=0.1,
xminorticks=true,
xlabel={$p_{\Delta n}$},
xmajorgrids,
xminorgrids,
ymode=log,
ymin=1e-05,
ymax=1,
yminorticks=true,
ylabel={$\PROB_\mathrm{fail}$},
ymajorgrids,
yminorgrids,
legend style={at={(0.03,0.97)},anchor=north west,draw=black,fill=white,legend cell align=left}
]

\addplot [color=blue,solid,mark=asterisk,mark options={solid}]
  table[row sep=crcr]{%
0.0316	0.553097345132743\\
0.01	0.0055110386103365\\
0.00316	0.000170892866236799\\
0.001	4.02795129669822e-05\\
0.000316	2.64149898093611e-05\\
0.0001	2.38461613546375e-05\\
3.16e-05	2.34445820954008e-05\\
1e-05	2.25066826842226e-05\\
3.16e-06	2.2570587028166e-05\\
1e-06	2.24247821230594e-05\\
3.16e-07	2.47386093569773e-05\\
};
\addlegendentry{PUM decoder (DLNC)};

\addplot [color=red,solid,mark=asterisk,mark options={solid}]
  table[row sep=crcr]{%
0.0316	0.981563421828909\\
0.01	0.17261307732722\\
0.00316	0.0153671422463229\\
0.001	0.00282055946900861\\
0.000316	0.00107524857517985\\
0.0001	0.000690918679089266\\
3.16e-05	0.000583879502225258\\
1e-05	0.000555404910826123\\
3.16e-06	0.00054556118258681\\
1e-06	0.000541977084204848\\
3.16e-07	0.000541026577845527\\
};
\addlegendentry{block-by-block decoding (DLNC)};

\addplot [color=red,dashed]
  table[row sep=crcr]{%
0.0316	0.0317388354334053\\
0.01	0.0317388354334053\\
0.00316	0.0317388354334053\\
0.001	0.0317388354334053\\
0.000316	0.0317388354334053\\
0.0001	0.0317388354334053\\
3.16e-05	0.0317388354334053\\
1e-05	0.0317388354334053\\
3.16e-06	0.0317388354334053\\
1e-06	0.0317388354334053\\
3.16e-07	0.0317388354334053\\
};
\addlegendentry{block-by-block decoding (lifting)};

\end{axis}
\end{tikzpicture}%
\vspace{-0.2cm}
\caption{$\PROB_\mathrm{fail}$ of different codes/decoders as a function of $p_{\Delta \mathrm{n}}$.}
\label{fig:num_res_10012}
\end{figure}

Fig.~\ref{fig:num_res_10012_gain} shows the gain obtained by the new coding scheme, i.e., the fraction of $\PROB_\mathrm{fail}$ of the block-by-block decoder for DLNC and the PUM decoder.
It makes clear that not only the PUM decoder is better at any $p_{\Delta \mathrm{n}}$, but especially good in a region where $p_{\Delta \mathrm{n}}$ is relatively large ($3 \cdot 10^{-4}$ to $10^{-2}$).

\begin{figure}[h]
%
%
\begin{tikzpicture}
\small
\begin{axis}[%
width=0.4\textwidth,
height=1.5in,
scale only axis,
xmode=log,
xmin=1e-07,
xmax=0.1,
xminorticks=true,
xlabel={$p_{\Delta n}$},
xmajorgrids,
xminorgrids,
ymode=log,
ymin=1,
ymax=100,
yminorticks=true,
ylabel={$\PROB_\mathrm{fail}$ gain},
ymajorgrids,
yminorgrids
]
\addplot [color=blue,solid,mark=asterisk,mark options={solid},forget plot]
  table[row sep=crcr]{%
0.0316	1.77466666666667\\
0.01	31.3213333333333\\
0.00316	89.9226666666667\\
0.001	70.0246666666667\\
0.000316	40.706\\
0.0001	28.974\\
3.16e-05	24.9046666666667\\
1e-05	24.6773333333333\\
3.16e-06	24.1713333333333\\
1e-06	24.1686666666667\\
3.16e-07	21.8697247706422\\
};
\end{axis}
\end{tikzpicture}%
\vspace{-0.2cm}
\caption{Gain of $\PROB_\mathrm{fail}$ of PUM codes compared to block-by-block decoding.}
\label{fig:num_res_10012_gain}
\end{figure}
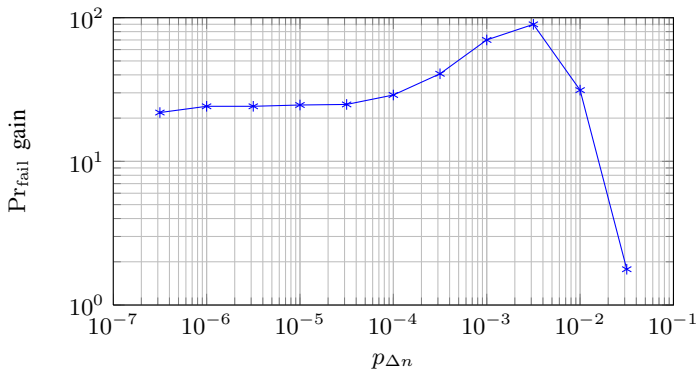
\end{myEx}

\newpage
\section{Conclusion}\label{sec:Conclusion}

\noindent
This paper extended the results of \cite{seidl2013differential} to the case of slowly-varying networks.
At first, we derived a probabilistic DLNC channel model for this case by analyzing the effects of joining/leaving nodes on the network channel matrix, and with that, on the effective error (\ref{Eq_E_i}) in a DLNC system.
Furthermore, we showed that PUM rank-metric codes are the proper error correction strategy for the situation at hand.
We confirmed our considerations by numerical simulations, and showed that in slowly-varying networks, DLNC in combination with rank-metric PUM codes outperforms the conventional, lifting-based RLNC approach.

\bibliographystyle{IEEEtran}
\bibliography{dlnc_pum}

\end{document}